\newcommand{\R}{\mathbb R}
\providecommand{\abs}[1]{\lvert#1\rvert}
\def\comment#1{\textit{[#1]}}
\def\comment#1{}
\begin{document}
\title{First steps toward the geometry of cophylogeny}

\author{Peter Huggins\inst{1}, Megan Owen \inst{2}, and Ruriko Yoshida\inst{3}}

\institute{Lane Center for Computational Biology, Carnegie Mellon University, Pittsburgh, PA, USA \and Statistical and Applied Mathematical Sciences Institute, Research Triangle Park, NC, USA \and University of Kentucky, Lexington, KY, USA}

\maketitle

\begin{abstract}
Here we introduce researchers in algebraic biology to the exciting new field of {\em cophylogenetics}.  
Cophylogenetics is the study of concomitantly evolving organisms (or genes), such as host and parasite species.  Thus the natural objects
of study in cophylogenetics are tuples of related trees, instead of individual trees.  
We review various research topics in algebraic statistics for phylogenetics, and propose analogs for cophylogenetics.
In particular we propose {\em spaces of cophylogenetic trees}, {\em cophylogenetic reconstruction}, and {\em cophylogenetic invariants}.
We conclude with open problems. 
\end{abstract}

\section{Introduction}



Phylogenetics has provided an abundant source of applications for algebraic statistics, with research areas including phylogenetic invariants, the geometry of tree space, 
and analysis of phylogenetic reconstruction.  Traditionally these applications --- like phylogenetics at large --- focused on the common ancestries among a single set of species 
(or set of gene homologs).  

On the biological front, however, phylogenetic research has since expanded to include other types of evolutionary relationships besides common ancestry.  
Here we present one of the largest new research topics in phylogenetics, called {\em cophylogenetics}, and we explore possible applications of algebraic statistics.  
Just as phylogenetics can be loosely described as the study of evolution, cophylogenetics is essentially the study of {\em coevolution}.   
Coevolution is the concomitant evolution and speciation of one species (or gene) with another.  In biology there are two actively studied examples of coevolution:

\begin{itemize}
\item Host--parasite coevolution (or more generally, symbiont coevolution):  Symbiont species interact with one another and often migrate together, and thus tend to have parallel lineages 
during evolution and speciation.   Thus symbionts often have similar phylogenetic trees.
\item  Gene trees and species trees:  Loosely speaking, genes can be thought of as ``symbionts'' living within a species.  Thus gene trees are often similar to one another and the species tree.
\end{itemize}


There have been many studies of coevolution of clades of host species and their corresponding clades of parasite species
(\cite{Pages2003} and its references).
One well-known example is the set of gopher/louse pairs reported in \cite{Hafner90}.
\begin{figure}[ht!]
\center
\includegraphics[scale= 0.6]{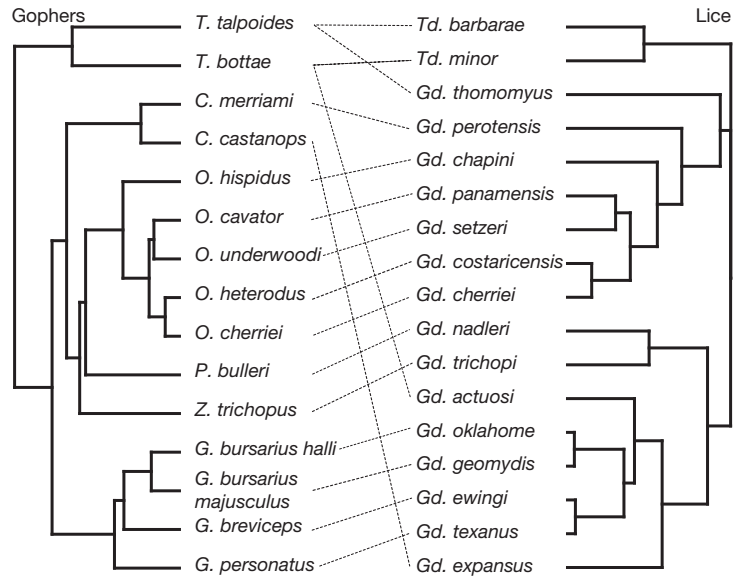}
\caption{Phylogenetic trees for gopher and louse data sets \cite{Hafner90} constructed via {\tt BEAST}. Hosts and their parasites are indicated by connecting dashed lines.
}\label{fig2}
\end{figure}
Even though there is significant evidence of parallel evolution between gophers and
their lice, their reconstructed tree topologies differ (Fig. \ref{fig2}).
In fact, reconstructed host and parasite trees are rarely identical.  This disagreement  could be due to reconstruction errors, either caused by noise in the input data or heuristical reconstruction methods, or host and parasite trees could be truly different.

As host and parasite coevolve, there are six commonly recognized types of events which can occur along lineages \cite{Pages2003}.  
These are shown in Fig. \ref{fig1}.
\begin{enumerate}
\item[(a)] A host and a parasite cospeciate, i.e., they speciate together.
\item[(b)] A parasite changes its host (\emph{host switch}), which is equivalent 
to a gene transfer in gene trees. 
\item[(c)] A parasite speciates independently of their host.
\item[(d)] A parasite goes extinct.
\item[(e)] A parasite fails to colonize all descendants of a speciating host lineage.
\item[(f)] A parasite fails to speciate.
\end{enumerate}
Notice that in the figure, events (b) through (f) can cause host and parasite tree topologies to differ.  
\begin{figure}[!ht]
\begin{center}
\includegraphics[scale=0.6]{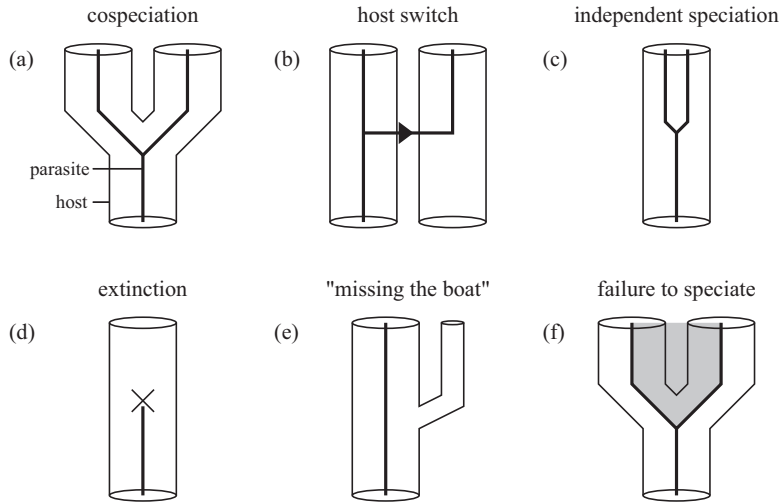}
\end{center}
\caption{Evolutionary events which can occur during host--parasite coevolution. \cite{Pages2003}.}\label{fig1}
\end{figure}


Analagous to host/parasite relationships, gene trees for a given set of species can vary from gene to gene and also differ from the species tree.  
In statistics, the relation between gene and species trees is well-understood in terms of coalescent processes \cite{Hein2005}.
However coalescent models usually assume that genes cannot be transfered between members of different species.  In reality, microbial organisms, for example, can exchange genetic material in a process called {\em lateral gene transfer} \cite{Doolittle1999}, which is analogous to host switching.  Just as host switching can cause parasite trees to disagree with host trees, lateral gene transfer can cause gene trees to disagree 
with species trees.  Combinatorially, these mechanisms correspond to {\em subtree prune and regraft (SPR)} operations \cite{Semple2003}, which are discussed further in Section \ref{cophy}.


Many techniques have been developed to compare gene trees \cite{Liu2007,Edwards2007,Ane2007,tmp,PHT,HK,geneLRT,PLS}, and host and parasite trees \cite{Brook2003,chris2008,Huelsenbeck2000b,Hafner90}.


In existing methods for comparing host and parasite trees, first point estimates of trees are made and then trees are compared to one another.  
However estimating each tree separately can exagerate the 
true differences between the trees.  Thus one of our main tenets in this paper is that when reconstructing and studying related trees, researchers in mathematical phylogenetics 
should explicitly consider tuples of trees, which we call cophylogenies:  

\begin{definition}\label{def0}
Let $\mathcal{T}_{H}$ and $\mathcal{T}_{P}$ be spaces of trees for two sets of taxa $H$ and $P$.  A {\em cophylogeny} is any pair of
trees $(T_H, \,  T_P) \in \mathcal{T}_H \times \mathcal{T}_P$.  
\end{definition}

More generally, for tree spaces $\mathcal{T}_{G_1}, \ldots, \mathcal{T}_{G_\ell}$,
a cophylogeny is any tuple $({T}_{G_1}, \ldots, {T}_{G_\ell}) \in \mathcal{T}_{G_1} \times \cdots \times \mathcal{T}_{G_\ell}$.  However in this paper 
we will focus mainly on the case of two sets of taxa, $H$ and $P$.  


There has been much study on underlying combinatoric, algebraic, and 
polyhedral geometric structures
for phylogenetic trees defined on a fixed set of species (see \cite{ASCB} and its references).
However for cophylogenies --- particularly cophylogeneis of trees which are presumed to be related --- the combinatoric, algebraic, and polyhedral geometric 
structures have received little attention.  Thus we propose extending mathematical phylogenetics to include  
cophylogenetics.  To this end we introduce {\em spaces of cophylogenetic trees}, {\em cophylogenetic reconstruction} problems, {\em cophylogenetic invariants}, and new geometries of tree space.  


\section{Spaces of cophylogenetic trees}\label{cophy}

In this paper, we assume all trees are unrooted unless specified.  
Nevertheless, almost all results will also be directly applicable to rooted trees 
e.g. by attaching a designated ``root leaf'' to convert rooted trees into unrooted trees.  
We also assume that all trees have $n$ leaves, which will usually be labeled $1,2,\ldots,n$.

\begin{definition} 
A dissimilarity map on $\{ 1, 2, \ldots ,n \}$ is an $n \times n$ symmetric matrix $D=\{d_{ij}\}_{i,j=1}^n$
 with zeroes on the diagonal and all other entries positive.  
\end{definition}
Equivalently the set of dissimilarity maps is $\R_{+}^{{n \choose 2}}$.  

\begin{definition} \label{def0b}
Let $D$ be a dissimilarity map.
$D$ is a {\em tree metric} if 
there exists an (edge-weighted) tree $T$ with leaves $\{1, 2, \cdots, n\}$, such that
\begin{itemize}
\item All edge weights in $T$ are positive, 
\item For every pair of $i, \, j$, $d_{ij} = $ the sum of the edge weights along
 the path from $i$ to $j$.
\end{itemize}
\end{definition}
Equivalently tree metrics can be defined by the {\em Four Point Condition}.
\begin{theorem}[Four Point Condition \cite{Buneman1971}]\label{thm1}
Let $D$ be
a dissimilarity map.  Then $D$ is a tree metric if and only if for all possible distinct leaves $i,j,k,l$, the maximum of 
$\{ d_{ij} + d_{kl}, d_{ik}+d_{jl},  d_{il}+d_{jk} \}$ is achieved at least twice.
\end{theorem}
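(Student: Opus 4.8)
The plan is to prove the two implications separately: necessity (a tree metric satisfies the four point condition) by a direct computation on quartets, and sufficiency (the four point condition forces a tree metric) by induction on the number of leaves $n$. Throughout the sufficiency argument I will lean on the triangle inequality for $D$; strictly speaking the four point condition characterises tree metrics only within the class of genuine metrics (for a dissimilarity map that is \emph{not} a metric, making all three quartet sums equal already satisfies the condition without producing a tree), so I would either assume $D$ is a metric or, in the base of the induction, record exactly which inequalities are needed for the edge weights to come out positive.

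For necessity, suppose $D$ is realised by an edge-weighted tree $T$ and fix four distinct leaves $i,j,k,l$. The minimal subtree of $T$ spanning them is a quartet; after relabelling, its topology is a single internal edge separating $\{i,j\}$ from $\{k,l\}$, with pendant lengths $a,b,c,d$ toward $i,j,k,l$ and internal length $m\geq 0$ (the star case being $m=0$). I would then read off $d_{ij}+d_{kl}=a+b+c+d$ while $d_{ik}+d_{jl}=d_{il}+d_{jk}=a+b+c+d+2m$, so the two ``crossing'' sums are equal and dominate the third; hence the maximum is attained at least twice. As the three pairings are symmetric in the labels, this covers every quartet.

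For sufficiency I would induct on $n$, the key point being that the restriction of $D$ to any subset again satisfies the four point condition. Assuming $D$ restricted to $\{1,\dots,n-1\}$ is realised by a tree $T'$, the task is to attach leaf $n$. Writing $(i\mid j)_n=\tfrac12(d_{in}+d_{jn}-d_{ij})$ for the Gromov product, in any realising tree this equals the distance from $n$ to the point where its path joins the $i$--$j$ path; I would therefore set the pendant length to $\ell=\min_{i,j}(i\mid j)_n$, pick a minimising pair $(i^\ast,j^\ast)$, and attach $n$ to $T'$ at the point $p$ on the $i^\ast$--$j^\ast$ path lying at distance $\tfrac12(d_{i^\ast n}+d_{i^\ast j^\ast}-d_{j^\ast n})$ from $i^\ast$ (subdividing an edge of $T'$ if necessary). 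Positivity of $\ell$ is exactly the triangle inequality with apex $n$.

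The hard part will be verifying that this attachment reproduces $d_{in}$ for \emph{every} leaf $i$, not merely for the reference pair: by construction the new tree gives $d_T(n,i)=\ell+d_{T'}(p,i)$, and I would show this equals $d_{in}$ by applying the four point condition to the quadruple $\{i,i^\ast,j^\ast,n\}$, using minimality of $\ell$ to pin down which two of the three sums coincide. Together with the degenerate cases (when $p$ is an existing vertex, so no new internal edge is created, and when $\ell=0$), this consistency check is the substance of the argument; the remainder is bookkeeping.
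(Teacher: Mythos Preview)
The paper does not actually prove this theorem: it is quoted as a classical result with a citation to Buneman (1971), and is used as input to later arguments (notably the extended Four Point Condition for the topology diagonal). So there is no ``paper's own proof'' to compare against; your proposal should be read as supplying a proof the paper omits.

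As a standalone argument, your outline is the standard one and is sound in structure. The necessity direction is complete as written. For sufficiency, the induction-with-Gromov-products strategy is correct, and you have correctly isolated the real work: checking that the attachment point $p$ determined by a minimising pair $(i^\ast,j^\ast)$ reproduces $d_{in}$ for \emph{every} leaf $i$. Applying the four point condition to $\{i,i^\ast,j^\ast,n\}$ together with the minimality of $(i^\ast\mid j^\ast)_n$ does force the needed equality; you should expect to split into cases according to whether $i$ lies on the $i^\ast$-side or the $j^\ast$-side of $p$ in $T'$, but nothing unexpected happens.

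Your caveat about the triangle inequality is well taken and worth keeping explicit. As the theorem is phrased in the paper (over \emph{distinct} quadruples only), the condition is vacuous for $n\le 3$, yet a three-point dissimilarity map violating the triangle inequality is not a tree metric; and for $n\ge 4$ the positivity of your pendant length $\ell$ really does need the triangle inequality at $n$. The cleanest fix is either to state the four point condition for not-necessarily-distinct indices (whence $k=l$ recovers the triangle inequality) or to add the hypothesis that $D$ is a metric. Either convention is common in the literature, and you have identified the issue accurately.
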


By Theorem \ref{thm1}, the set of tree metrics on $\{1,2,\ldots,n\}$ can be realized as a union of cones in $\R_{+}^{n \choose 2}$ \cite{ASCB}.


\begin{definition}\label{def00}
A subset $S \subset \mathcal{T}_H \times \mathcal{T}_P$ is called a {\em space of cophylogenetic trees}.
If $\mathcal{T}_H$ and $\mathcal{T}_P$ are spaces of tree topologies (instead of tree metrics), then 
$S \subset \mathcal{T}_H \times \mathcal{T}_P$ is called a {\em space of cophylogenetic tree topologies}.
\end{definition}
Our definition is deliberately vague, as there are many spaces of cophylogenetic trees 
which are biologically and mathematically interesting.


\begin{example}\label{ex1b}
Even if trees $T_H$ and $T_P$ have the same topology, variable rates of evolution 
can cause edge lengths to differ between the trees.  Thus we can consider the ``topology diagonal'' of $\mathcal{T}_H \times \mathcal{T}_P$: 
\begin{align*}
S &= \{(D_H, D_P):  D_H \text{ and } D_P \text{ are tree metrics, with the same underlying } \\
&\text{ bifurcating tree topology }\}.
\end{align*}
Analagous to the traditional space of trees, the topology diagonal is a union of $(2n-3)!$ polyhedral cones, and can be defined by an extended Four Point Condition:  
\begin{proposition}
If $D_H, D_P \in \R^{n \choose 2}$ are dissimilarity maps, then  $(D_H, D_P) \in S$ if and only if the Four Point Condition holds for the three dissimilarity maps $D_H, D_P$, and $D_H + D_P$
(where each maximum in the Four Point Condition is attained exactly twice).  Thus the topological closure of $S$ is (the negation of) a tropical variety.  
\end{proposition}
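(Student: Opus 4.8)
The plan is to reduce everything to the behaviour of the three dissimilarity maps on individual quartets, exploiting the classical fact that a bifurcating tree topology is determined by the quartet splits it induces. First I would record the translation between the strengthened Four Point Condition and topology: by Theorem~\ref{thm1}, a dissimilarity map $D$ satisfies the Four Point Condition with every maximum attained \emph{exactly} twice if and only if $D$ is a tree metric (Definition~\ref{def0b}) on a fully resolved (bifurcating) tree. Indeed, for a quartet $i,j,k,l$ realized on a tree with positive edge lengths, if the induced split is $ij|kl$ then the four-point identity $d_{ik}+d_{jl}=d_{il}+d_{jk}=d_{ij}+d_{kl}+2\ell$, with $\ell>0$ the internal edge length, shows that the \emph{smallest} of the three sums names the split and that the maximum is attained on exactly the two ``crossing'' pairings; a three-way tie occurs precisely when $\ell=0$, i.e. the quartet is unresolved. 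Thus the hypotheses on $D_H$ and on $D_P$ are equivalent to ``$D_H$ is a bifurcating tree metric on some $T_H$'' and ``$D_P$ is a bifurcating tree metric on some $T_P$,'' and the claim reduces to: the extra hypothesis on $D_H+D_P$ holds if and only if $T_H=T_P$.

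For the forward direction, suppose $(D_H,D_P)\in S$, so $D_H$ and $D_P$ are tree metrics on a common bifurcating topology $T$, coming from positive edge-weight vectors $w_H$ and $w_P$. Since each entry $d^H_{ij}$ (resp. $d^P_{ij}$) is the sum of $w_H$ (resp. $w_P$) along the $i$--$j$ path in $T$, the sum $D_H+D_P$ is exactly the tree metric on the \emph{same} tree $T$ with the positive weights $w_H+w_P$. Hence $D_H+D_P$ is again a bifurcating tree metric, and by the first paragraph it satisfies the Four Point Condition with every maximum attained exactly twice; so all three conditions hold.

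The crux, and the only step that is not routine, is the converse. Assume the strengthened Four Point Condition for all three maps; then $D_H,D_P$ are bifurcating tree metrics on topologies $T_H,T_P$, and it suffices to show they induce the same split on every quartet. Suppose not: for some $i,j,k,l$ the tree $T_H$ selects one split while $T_P$ selects a different one among $\{ij|kl,\ ik|jl,\ il|jk\}$. Writing $m_H<M_H$ for the minimal and the (doubly attained) maximal $H$-sums on this quartet, and $m_P<M_P$ likewise for $P$, the disagreement means the $H$-minimum and the $P$-minimum sit at two \emph{distinct} pairings, so the remaining third pairing is simultaneously an $H$-maximum and a $P$-maximum. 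The three sums for $D_H+D_P$ on this quartet are therefore $m_H+M_P$, $M_H+m_P$, and $M_H+M_P$, whence the maximum $M_H+M_P$ strictly exceeds the other two and is attained \emph{exactly once}. This violates the Four Point Condition for $D_H+D_P$, contradicting the hypothesis. Hence $T_H$ and $T_P$ agree on every quartet, so $T_H=T_P$ and $(D_H,D_P)\in S$. I expect this quartet computation — the observation that a single quartet disagreement forces a unique (hence illegal) maximum in the sum — to be the whole weight of the argument.

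Finally, the tropical statement follows by relaxing ``exactly twice'' to ``at least twice.'' The relaxed Four Point Condition is precisely the tropical (max-plus) Pl\"ucker condition cutting out the space of phylogenetic trees as a tropical variety \cite{ASCB}; imposing it simultaneously on the three linear images $D_H$, $D_P$, and $D_H+D_P$ is again such a condition, and the resulting set is exactly the closure $\overline S$ (now permitting unresolved quartets and zero-length edges). The parenthetical ``negation'' merely records the sign convention: the tree-metric cone is defined by a max-condition, whereas the standard tropical Grassmannian uses min, so $\overline S$ is the negation of the corresponding tropical variety.
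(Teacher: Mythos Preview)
Your proof is correct and follows essentially the same approach as the paper: both reduce to showing that for bifurcating tree metrics $D_H,D_P$ on topologies $T_H,T_P$, the sum $D_H+D_P$ is a (bifurcating) tree metric if and only if $T_H=T_P$, handle the forward direction via the cone structure (equivalently, additivity of edge weights), and prove the converse by locating a quartet where the splits disagree and observing that the column carrying both maxima forces a unique maximum in the sum. Your $m_H,M_H,m_P,M_P$ bookkeeping is exactly the paper's $(x,x,y)/(w,z,z)$ matrix in different notation, and you treat the tropical-closure remark somewhat more explicitly than the paper does.
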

\begin{proof}
It suffices to prove that if $D_H, D_P$ are tree metrics with respective bifurcating tree topologies $T_H,T_P$, then $D_H + D_P$ is a tree metric if and only if $T_H = T_P$.  So suppose 
$D_H = \{d_{ij}\}, D_P = \{c_{ij}\}$
are tree metrics with underlying bifurcating tree topologies $T_H, T_P$.  If $T_H = T_P$, then $D_H + D_P$ is also a tree metric with tree topology $T_H,$ because tree metrics with topology $T_H$ form a cone.
Conversely, if $T_H \neq T_P$, then there will be some choice of four taxa $i,j,k,l$, 
such that the quartet induced by $i,j,k,l$ in $T_H$ is different than the quartet induced by $i,j,k,l$ induced in $T_P$.  By the Four Point Condition, in the matrix 
\begin{align*}
\begin{array}{ccc}
 d_{ij} + d_{kl} \quad & d_{ik}+d_{jl} \quad & d_{il}+d_{jk} \\
 c_{ij} + c_{kl} \quad & c_{ik}+c_{jl} \quad  & c_{il}+c_{jk} \\
\end{array}
\end{align*}
the maximum of each row will be attained twice---in fact, exactly twice since $T_H,T_P$ are bifurcating.  Thus each row attains a strict minimum.  
Furthermore, since $i,j,k,l$ induce different quartets in $T_H$ and $T_P$, the row minimums must be in different columns.  Thus, without loss of generality we can write the above matrix as 
\[
\begin{array}{ccc}
x \quad & x \quad & y \\
w \quad & z \quad & z \\
\end{array}
\]
where $y < x$ and $w < z$.  Summing the rows, we get the three numbers $(x+w,x+z,y+z)$, which attains a unique maximum $x+z$.  Thus $D_H + D_P$ cannot satisfy the Four Point Condition so $D_H + D_P$ 
is not a tree metric.
\qed
\end{proof}
Tropical varieties were first introduced to mathematical phylogenetics in \cite{speyer}, where it was shown that the space of trees is a tropical variety.  
We think it is an interesting problem to find other important spaces of cophylogenetic trees which are tropical varieties, or can be expressed by conditions involving linear equations and inequalities.
\end{example}

\begin{example}[Host switching/lateral gene transfer]
As we noted earlier, host-switching and lateral gene transfers correspond
to subtree-prune-and-regraft (SPR) \cite{Semple2003} operations on trees.
In an SPR operation, a subtree is detached, or pruned, from the tree by
cutting an edge, and reattached to the middle of a different edge.
The {\em SPR distance} between two trees is the minimum number of SPR
operations needed to transform one tree into the other.
As SPR operations are fundamental in mathematical cophylogenetics, we
define the {\em $k$-SPR space of cophylogenetic trees} as the set of all
cophylogenies $(T_H, T_P)$ where $T_H, T_P$ have SPR distance no more than
$k$.  If $k=1$, there are $2(n-3)(2n-7)$ tree topologies for $T_H$, 1 SPR
operation from $T_P$, as described in \cite[Theorem 2.6.2]{Semple2003}.

However, for unrooted trees, the SPR distance only provides a lower bound
on the number of host-switches or lateral gene transfers that have
occurred, because in this case, the SPR operations must be consistent with
the trees' orientations with respect to time.  To find the minimum number
of host-switches or lateral gene transfers, we must consider the SPR
distance between rooted trees, whose vertices are totally ordered with
respect to time \cite{Song2003}.  See \cite{Song2006} for a study of the analagous 
1-SPR space of cophylogenetic trees in this setting.  
\qed
\end{example}

\begin{example}[Coalescent cophylogeny]
In a coalescent cophylogeny, $T_H$ is a (rooted) species tree, and $T_P$ is
a (rooted) gene tree generated from $T_H$ according to the coalescent
process \cite{Hein2005}.  A \emph{coalescent history} is a list of the
branches of the species tree on which coalescences in the gene tree occur.
 We define the \emph{k-coalescent history space} to be the set of all
cophylogenies  $(T_H, T_P)$ for which the coalescent history for $T_H$ and
$T_P$ occurs with probability $\geq k$ out of all valid coalescent
histories for $T_H$.  Degnan and Salter \cite{Degnan2004} showed how to
compute this probability.  Note that the more dissimilar the topology of
$T_H$ and $T_P$, the smaller the number of valid coalescent histories, and
the lower their probabilities.  However, for any given species tree $T_H$,
there is at least one valid coalescent history for each gene tree
topology, namely, the history in which all coalescent events occur before
any speciation in $T_H$.
\qed
\end{example}

\begin{example}
A \emph{Nearest Neighbor Interchange (NNI)} operation swaps two adjacent subtrees, or subtrees joined by one edge, in an unrooted tree, as illustrated in Fig.~\ref{fig:NNIoperation}
The \emph{NNI distance} between two trees is the minimum number NNI operations needed to transform one tree into the other \cite{Robinson1971}.

Analagous to the $k$-SPR space of cophylogenetic trees, we can define a space of cophylogenetic trees as the set of all cophylogenies  $(T_H, T_P)$ 
where $T_H,T_P$ have NNI distance no more than $k$.  
In general, we can define a space of cophylogenetic trees as the set of all cophylogenies $(T_H,T_P)$ where $d(T_H, T_P) \leq k$ for some distance or disimilarity measure on tree metrics (or tree topologies).

\begin{figure}[ht]
\centering
\includegraphics[scale = 0.35]{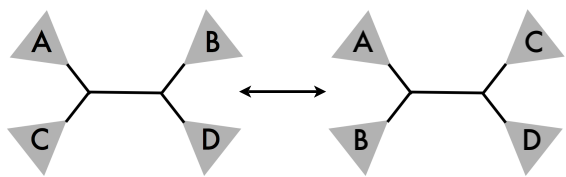}
\caption{An NNI operation.}
\label{fig:NNIoperation}
\end{figure}
Other choices for $d(T_H,T_P)$ might include the geodesic distance between $T_P$ and $T_H$ in tree space \cite{Billera2001}; the quartet distance \cite{Estabrook1985}; and the Robinson-Foulds symmetric difference \cite{RF81}.
\qed
\end{example}

\begin{example}[$k$-interval cospeciation]
In host--parasite coevolution, a speciation in the host is likely to be followed by a reactionary speciation in the parasite, and vice versa.  
If a reactionary speciation is delayed long enough, then host and parasite tree topologies can disagree.  In fact, if multiple consecutive speciations occur in host before a reactionary speciation 
occurs in parasite (or vice versa), then the tree topologies can be quite different.  

Biologically, it is highly unlikely that a large number of consecutive speciations can accumulate in a host lineage, without any reactionary speciation in parasite.  
Thus, when reconstructing host and parasite trees, we might assume that only a bounded number of consecutive speciations can occur in any host lineage before a reactionary speciation in parasite 
(and vice versa).  Combinatorially this implies that for each pair of host species $A,B$, and corresponding parasite species $a, b$, the number of edges between $A,B$ 
is within $k$ of the number of edges between $a,b$.  We say such a cophylogeny satisfies \emph{$k$-interval cospeciation}, and the set of all such cophylogenies is called the \emph{$k$-interval 
space of cophylogenetic trees}.  
\qed
\end{example}

If we treat $T_H$ and $T_P$ as having edge weights of 1, then asking whether $T_H$ and $T_P$ satisfy $k$-interval cospeciation is 
equivalent to asking whether the $L_{\infty}$-norm between their dissimilarity maps is $\leq k$.  
Although other norms on dissimilarity maps have been studied (\cite{Williams1971}, \cite{Steel1993}), the $L_{\infty}$-norm appears to have only been used in the context of a map between two versions of tree space \cite{Moulton1999}. 

We believe that in cophylogenetics applications, distances based on $k$-interval cospeciation will be more useful than the NNI distance.  
There has been some work characterizing trees within a prescribed NNI distance \cite{Robinson1971}, but to our knowledge the combinatorial properties of $k$-interval cospeciation
have not been studied.  We note that $k$-interval cospeciation and NNI distance are related concepts, and for $k = 1$ we have 

\begin{theorem}\label{thm3}
Suppose $T_H, T_P$ are unrooted trees on the same set of leaves.  Then $T_H$ and $T_P$ satisfy 1-interval cospeciation if and only if $T_H$ and $T_P$ differ by at most 1 NNI operation.
\end{theorem}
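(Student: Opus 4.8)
The plan is to prove the two directions separately. Throughout I treat $T_H,T_P$ as binary trees carrying unit edge weights, so that $d^H_{ij}$ and $d^P_{ij}$ are the numbers of edges on the respective $i$--$j$ paths, and $1$-interval cospeciation is exactly the condition $|d^H_{ij}-d^P_{ij}|\le 1$ for all leaves $i,j$. The two book-keeping facts I rely on are: (i) each $d_{ij}$ equals the number of splits (edges) of the tree that separate $i$ from $j$; and (ii) for binary trees, differing by at most one NNI is equivalent to the split systems $\Sigma_H,\Sigma_P$ agreeing in all but at most one split, since contracting the common internal edges leaves a single degree-four vertex whose three resolutions are exactly the NNI neighbours.

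For the forward (easy) direction, suppose $T_P$ is obtained from $T_H$ by a single NNI on an internal edge $e=uv$, whose removal exhibits four subtrees $A,B$ (at $u$) and $C,D$ (at $v$), the move swapping $B$ and $C$. Every leaf lies in exactly one of $A,B,C,D$, so I would run the case analysis on the location of $i$ and $j$: if they lie in the same subtree the distance is unchanged, and for the six cross-cases a direct computation shows $d^P_{ij}-d^H_{ij}\in\{+1,-1,0\}$ (it is $+1$ for the $A$--$B$ and $C$--$D$ pairs, $-1$ for the $A$--$C$ and $B$--$D$ pairs, and $0$ otherwise). Hence $|d^H_{ij}-d^P_{ij}|\le 1$ for every pair, giving $1$-interval cospeciation; the case $T_H=T_P$ is trivial.

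For the converse I argue by contraposition: assuming $T_H$ and $T_P$ are \emph{not} within one NNI, so by fact (ii) they disagree in at least two splits, I must exhibit a pair with $|d^H_{ij}-d^P_{ij}|\ge 2$. Writing $P=\Sigma_H\setminus\Sigma_P$ and $Q=\Sigma_P\setminus\Sigma_H$, fact (i) gives $d^H_{ij}-d^P_{ij}=\#\{\sigma\in P:\sigma\text{ separates }i,j\}-\#\{\tau\in Q:\tau\text{ separates }i,j\}$, since the common splits cancel. The plan is to contract all common splits to obtain a tree $T_0$ on whose polytomies the disagreements are localized: either some polytomy has degree $\ge 5$ and is resolved two different ways with no shared resolving split, in which case I reduce to the statement that two binary trees on $\ge 5$ taxa with disjoint internal-split systems already contain two taxa whose distance differs by $2$; or there are two distinct polytomies, each resolved differently, and I choose $i,j$ so that the $T_H$-path crosses both corresponding broken edges in series.

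The hard part is this last construction: I need the two unit-sized local disagreements to \emph{add} rather than cancel, and not to be offset by $Q$-splits separating the same pair. The key enabling observation is that changing the resolution at a vertex of degree $\ge 4$ always moves some incident branch strictly closer to, and some strictly farther from, any fixed reference branch; this supplies the sign freedom needed to orient both contributions in the same direction along the path joining the two polytomies, while routing the endpoints into ``pure'' common subtrees keeps the cancelling $Q$-count at zero. Verifying that such a choice is always available --- in particular handling higher-degree polytomies and confirming the single-polytomy reduction --- is where the real combinatorial work lies, and I expect it to be the main obstacle.
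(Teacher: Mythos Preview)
Your forward direction is correct and is essentially what the paper does (it merely says ``it is clear''); the sign pattern $(+1,+1,-1,-1,0,0)$ over the six cross-block pairs is right.

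For the converse you take a genuinely different route. The paper inducts on the number of leaves: if $T_H,T_P$ share a cherry, collapse it and apply the hypothesis; otherwise a cherry $(a,b)$ of $T_H$ must have $d_{T_P}(a,b)=3$, which pins down the local shape of $T_P$, and the paper then tracks a penultimate tree $T_c$ in an NNI path from $T_H$ to $T_P$ (so $T_c$ still contains the cherry $(a,b)$), collapses that common cherry in $T_H$ and $T_c$, invokes induction to get a violating pair in the collapsed trees, and lifts it back through a four-case analysis on where the violating leaves sit relative to the subtree $S_c$ moved by the last NNI. You instead contract the common splits and localise the disagreement to polytomies of the quotient tree $T_0$, aiming to route a leaf--leaf path so that the local $\pm 1$ contributions from $P$- and $Q$-splits accumulate to $\pm 2$. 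Your fact~(ii) and the separating-split formula for $d^H_{ij}-d^P_{ij}$ are correct, and the two-degree-$4$-polytomy case really does work: with one branch fixed at a degree-$4$ vertex the three remaining choices realise each of the nets $\{-1,0,+1\}$, so one can pick $+1$ at both ends and, recursively, net $\ge 0$ at any further degree-$4$ polytomies encountered while routing out to leaves.

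The genuine gap is the degree-$\ge 5$ case. Your reduction to ``two binary trees on $\ge 5$ taxa with disjoint internal-split systems already contain two taxa whose distance differs by $2$'' is precisely the special case of the theorem in which $\Sigma_H\cap\Sigma_P$ contains no internal split, so invoking it does not reduce the difficulty: you still owe an independent argument of comparable strength there. Moreover, that reduction only transfers cleanly to \emph{leaf} distances when the branches incident to the big polytomy are common subtrees, i.e.\ when no other polytomies are present; with a mixture of polytomy sizes the interaction must be handled explicitly, and your sketch does not yet do this. The paper's cherry-collapsing induction avoids all of this global split bookkeeping at the cost of the $T_c$ construction and its four-case lift.
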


\emph{Proof}:  See Section \ref{sec:k_1_proof}.

Since there are $2n -6$ bifurcating trees which are exactly one NNI move away from a given bifurcating tree with $n \geq 4$ leaves \cite{Robinson1971}, 
the $1$-interval space of cophylogenetic trees on $n$ taxa contains 
$(2n - 5) \cdot (2n - 5)!!$ ordered pairs of bifurcating tree topologies when $n \geq 4$.

\section{Cophylogenetic reconstruction} \label{algorithms}

In the popular viewpoint echoed in \cite{Moulton1999}, distance-based methods for reconstructing phylogenetic trees from dissimilarity maps can be regarded as {\em retractions} from $\R_+^{n \choose 2}$ to 
tree metrics.  Due to their rich mathematical structure, two methods for phylogenetic reconstruction have received a great deal of attention in the mathematical biology community:   
Neighbor joining (NJ) \cite{Saitou1987,Mihaescu2006} and balanced minimum evolution (BME) \cite{Desper2002}.  
Intriguingly, Gascuel and Steel \cite{Steel2006} have recently shown that NJ is a greedy heuristic for building BME trees; see also \cite{NJME}.   

In this paper we propose distance-based {\em cophylogenetic reconstruction}, which means infering a cophylogeny $(T_H, T_P)$ given input tuples of dissimilarity maps $(D_H, D_P)$.  
Common methods for reconstructing $T_H$ and $T_P$ merely perform standard phylogenetic reconstruction, infering $T_H$ from $D_H$, and $T_P$ from $D_P$.  
Ideally, methods for cophylogenetic reconstruction should 
incorporate constraints or mixed objective functions that account for similarity between $T_H$ and $T_P$.  
We believe cophylogenetic reconstruction is an important avenue of future research in mathematical phylogenetics. 

Specifically we propose studying {\em constrained cophylogenetic reconstruction}, and {\em minimum coevolution} methods.  
Due to the widespread mathematical interest in NJ and BME, we will focus on cophylogenetic reconstruction based on these methods.  
See \cite{NJME} for a description and references on NJ and BME; here we will only briefly introduce basic facts and notation for the BME method.
Given a dissimilarity map $D$, where $D$ is assumed to be a noisy observation of a tree metric $D_T$, BME chooses a tree topology $T$ whose sum of estimated branch lengths is minimal.  
The sum of estimated branch lengths is also called the {\em length} of $T$, written $\ell(T)$.  The formulation of BME has a rich and elegant structure:  for each topology $T$ 
we have $\ell(T) = b(T) \cdot D$, where $b(T)$ is the {\em BME vector} for $T$ (which does not depend on $D$).  
Thus, BME is equivalent to minimizing the linear functional $D$ over the polytope $B = \hbox{conv} \{ b(T) \}_T$ , which is called the {\em BME polytope} \cite{NJME}.

\subsection{Retraction onto spaces of cophylogenetic trees}
Fix a space of cophylogenetic trees $S \subset {\mathcal T}_H \times {\mathcal T}_P \subset \R_+^{n \choose 2} \times \R_+^{n \choose 2}$.  
A {\em constrained cophylogenetic reconstruction} method is a retraction  
$\R_+^{n \choose 2} \times \R_+^{n \choose 2} \to S$, i.e. a mapping which restricts to the identity map on $S$.  We believe retractions onto spaces of cophylogenetic trees 
are an important new field of study in mathematical biology.  

In particular, we can formulate {\em constrained joint BME}:  Given dissimilarity maps $D_H,D_P$, find a pair of tree topologies $(T_H,T_P) \in S$ 
whose length sum $\ell(T_H) + \ell(T_P)$ is minimal (where tree length $\ell(T)$ is defined as in BME).  Simiarly we can define the {\em joint BME polytope}, which 
is the subpolytope $B' \subset B \times B \subset \R_+^{n \choose 2} \times \R_+^{n \choose 2}$, whose vertices correspond to pairs of tree topologies $(T_H,T_P) \in S$.  

Constrained joint BME is a very new research topic.  Recently Matsen \cite{erick} has studied constrained joint BME under the $k$-SPR space of rooted cophylogenetic trees.  
In the spirit of \cite{Steel2006}, Matsen devises a neighbor joining algorithm, which, given dissimilarity maps $D_H, D_P$, 
finds a pair of trees $(T_H,T_P)$ within $k$ rooted SPR moves of one another; Matsen's algorithm attempts to minimize the sum of tree lengths $D_H \cdot b(T_H) + D_P \cdot b(T_P)$.  
\cite{erick} is the only work of its kind, and so far has not been extended to other spaces of cophylogenetic trees, such as $k$-NNI or $k$-interval.  We believe this is an important direction 
for future research.  Moreover, joint BME polytopes have not yet been studied for any space of cophylogenetic trees, and we believe this is an interesting geometric problem in light of \cite{NJME}.  

\subsection{Balanced minimum coevolution}

Constrained joint BME can be regarded as a constrained optimization problem:  Minimize the sum of tree lengths $\ell(T_H) + \ell(T_P)$, subject to $(T_H, T_P) \in S$.  Alternatively, given a 
distance measure $d(T_H, T_P)$ between trees or tree topologies, we can define the total {\em coevolution} $\ell(T_H, T_P)$ as $$\ell(T_H,T_P) := \ell(T_H) + \ell(T_P) + d(T_H,T_P),$$ where
$\ell(T_H)$ and $\ell(T_P)$ are defined as in standard BME.   We call the pair of trees $(T_H,T_P)$ which minimizes $\ell(T_H,T_P)$ the {\em balanced minimum coevolution} (BMC) cophylogeny.  
So far BMC has not been studied for any choice of distance measure between trees.  Natural choices of $d(T_H,T_P)$ for initial study might include SPR distance, NNI distance, and $k$-interval distance.
Also, analagous to \cite{Steel2006}, we can ask whether there is a fast heuristic like NJ for finding BMC cophylogenies.

\section{Cophylogenetic invariants}

Phylogenetic invariants are a well-studied subject in algebraic biology (\cite{allmanrhodes_chapter} and its references),  
and can be generalized to {\em cophylogenetic invariants}.

First, we would like to remind the reader about phylogenetic invariants.
Let $T$ be a rooted tree with $n$ leaves and let
$\mathcal{V}(T)$ be the set of nodes of $T$.  To each node $v
\in \mathcal{V}(T)$ let $X_v$ be a discrete random variable which takes
$k$ distinct states.
Consider the probability $P(X_v = i)$ that $X_v$ is in state $i$.
Let $\pi$ be a distribution of the
random variable $X_r$ at the root node $r$.  For each node $v \in
\mathcal{V}(T) \backslash \{r\}$, let $a(v)$ be the unique parent
of $v$. The transition from $a(v)$ to $v$ is given by a $k \times
k$-matrix $A^{(v)}$ of probabilities.  Then the probability
distribution at each node is computed recursively by the rule
\begin{equation}
\label{recursive P} P(X_v = j) \quad = \quad \sum_{i=1}^k
A^{(v)}_{ij} \cdot P(X_{a(v)}=i).
\end{equation}
This rule induces a joint distribution on all the random variables
$X_v$. We label the leaves of $T$ by
 $1,2,\ldots,n $, and we abbreviate
the marginal distribution on the variables at the leaves as
follows:
\begin{equation}
\label{unknownP}
 p_{i_1 i_2 \ldots i_n} \quad = \quad
P(X_{1} = i_1,X_{2} = i_2, \ldots, X_{n} = i_n ).
\end{equation}

A \emph{phylogenetic invariant} of the model is a polynomial in
the leaf probabilities $p_{i_1 i_2 \cdots i_n}$ which vanishes for
every choice of model parameters. The set of these polynomials
forms a prime ideal in the polynomial ring over the unknowns
$p_{i_1 i_2 \cdots i_n}$ (\cite{Allman2003a,Sturmfels2005} and 
their references).  Furthermore we have the following theorem.

\begin{theorem}[\cite{Sturmfels2005}]\label{thm2}
For any group based model on a phylogenetic tree $T$, the prime
ideal of phylogenetic invariants is generated by the invariants of
the local submodels around each interior node of $T$, together
with the quadratics which encode conditional independence statements
along the splits of $T$.
\end{theorem}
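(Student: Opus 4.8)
The plan is to linearize the model via the discrete Fourier transform and then recast the statement as a question about toric ideals. First I would apply the character (Hadamard) transform associated to the underlying abelian group $G$ of the model. Because the model is group based, every transition matrix $A^{(v)}$ is a $G$-circulant and is therefore simultaneously diagonalized by the character table of $G$. Changing coordinates on both the parameter space and the space of leaf distributions $p_{i_1\cdots i_n}$ by this transform converts the recursive parametrization in Equation (\ref{recursive P}) into a \emph{monomial} map: in the new Fourier coordinates $q_{(g_1,\ldots,g_n)}$, indexed by labelings of the leaves by elements of $G$, each coordinate is a single product of edge parameters. Consequently the image is a toric variety and the ideal of phylogenetic invariants becomes a toric ideal $I_{A}$ for an explicit integer matrix $A=A(T)$. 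Since toric ideals of integer matrices are automatically prime, this accounts for the primality asserted in the theorem and reduces everything to describing a generating set of $I_{A}$ combinatorially.

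Next I would read off the combinatorics of $A(T)$ from the tree. Each edge $e$ of $T$ induces a split of the leaf set, and the exponent of the parameter attached to $e$ in the monomial for $q_{(g_1,\ldots,g_n)}$ is the sum $\sum_{i \in e\text{-side}} g_i$; in particular $q_{(g_1,\ldots,g_n)}$ is a nonzero monomial only when $\sum_i g_i = 0$, the conservation law around the tree. The toric ideal $I_{A}$ is then the lattice ideal of the kernel of $A(T)$, and the theorem becomes the assertion that this lattice admits a generating set (a Markov basis) split into two families: local generators supported at a single interior vertex, and quadratic generators supported at a single interior edge.

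The heart of the argument is a decomposition of $T$ along its interior structure, carried out by induction on the number of interior edges. Cutting $T$ at an interior edge $e$ produces two subtrees $T_1,T_2$ glued along $e$; in Fourier coordinates the monomial for $T$ factors as a product of the monomials for $T_1$ and $T_2$ that share the single edge variable of $e$. This exhibits $I_{A(T)}$ as a toric fiber product of $I_{A(T_1)}$ and $I_{A(T_2)}$ over the coordinate indexed by the character crossing $e$. The gluing lemma for toric fiber products then yields a Markov basis for the product as the union of lifts of Markov bases of the two factors together with the quadratic binomials that enforce the rank-one flattening condition at $e$ --- and these quadrics are precisely the conditional independence statements along the split defined by $e$. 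Iterating the split down to the stars $K_{1,\deg(v)}$ at each interior vertex $v$ leaves exactly the local submodel invariants as the remaining factor generators.

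The main obstacle is the gluing lemma itself: showing that the union of the local and split families \emph{generates} $I_{A(T)}$, rather than merely being \emph{contained} in it. Containment is routine, but generation requires controlling the interactions between two local moves across a shared edge. Concretely, one must show that any binomial in $I_{A(T)}$ can be reduced, modulo the quadratic split relations, to a sum of binomials each supported on one side of some interior edge, so that the inductive hypothesis applies. This is where the compatibility of the Fourier grading with the tree's edge structure is essential, and it is the step I expect to require the most care; the quadratic degree of the split relations is exactly what makes the induction close.
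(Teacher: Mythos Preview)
The paper does not contain a proof of this theorem at all: it is quoted verbatim as a cited result from Sturmfels and Sullivant \cite{Sturmfels2005}, serving only as background for the definition of cophylogenetic invariants that follows. There is therefore no ``paper's own proof'' against which to compare your proposal.

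That said, your outline is a faithful sketch of the argument in the original Sturmfels--Sullivant paper: the discrete Fourier (Hadamard) transform turning the group-based model into a monomial parametrization, the resulting toric ideal, and the inductive decomposition of $T$ along interior edges with quadratic gluing relations at each split are exactly the ingredients of their proof. Your invocation of the ``toric fiber product'' is mildly anachronistic --- Sullivant introduced that name and general framework a couple of years after \cite{Sturmfels2005} --- but the underlying gluing lemma you describe is precisely what Sturmfels and Sullivant prove by hand. You are also right that the generation step (as opposed to containment) is where the work lies; in the original this is handled by an explicit reduction argument in Fourier coordinates rather than by appeal to a general fiber-product theorem, but the content is the same.
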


It is natural to ask whether invariants of cophylogenies can also be characterized.
Fix a group-based model for gene sequence evolution.  Let $p_{i_1 i_2 \cdots i_n}$ and $q_{i_1 i_2 \cdots i_n}$ be indeterminates 
representing leaf probabilities, for host and parasite sequences respectively.
For each tree topology $T_P$, let $I_{T_P} \subset R[p_{i_1 i_2 \cdots i_n}]$ be the ideal of phylogenetic invariants for $T_P$.
Similarly let $I_{T_H} \subset R[q_{i_1 i_2 \cdots i_n}]$ be the ideal of phylogenetic invariants for host topology $T_H$.

\begin{definition}
Given host and parasite tree topologies $T_H,T_P$, the {\em ideal of cophylogenetic invariants} for $T_P,T_H$ is the intersection ideal 
$R' I_{T_P} \cap R' I_{T_H} \subset R'$, where $R'= R[p_{i_1 i_2 \cdots i_n}, q_{i_1 i_2 \cdots i_n}]$.  
\end{definition}

\begin{definition}
Fix a space of cophylogenetic tree topologies $S \subset \mathcal{T}_H \times \mathcal{T}_P$.  Given a tree topology $T_H$, let $S_{T_H} = \{T_P \, | \, (T_H,T_P) \in S \}$.
The intersection $$J_{T_H} = \cap_{T_P \in S_{T_H}} I_{T_P}$$ is the {\em ideal of compatability invariants} for $T_H$ under $S$.
\end{definition}

Ideals of cophylogenetic invariants are easily computed from Theorem \ref{thm2}.  We hope that ideals of compatibility invariants can also be computed for particular 
spaces of cophylogenetic trees, without resorting to brute force computation of intersection ideals.


\section{Open problems} \label{openprobs}
In this section we conclude by summarizing open problems.

\begin{problem}\label{pro1}
Is there a generalization of Theorem \ref{thm3} to $k > 1$?  In other words, given a host tree $T_H$, 
is there a combinatorial characterization of parasite trees under $k$-interval cospeciation?  How many parasite trees are possible for each host tree?
\end{problem}

\begin{problem}
Study the geometry and combinatorics of various spaces of cophylogenetic trees, in the spirit of \cite{Billera2001}.
\end{problem}

\begin{problem}\label{pro2}
Are there other interesting spaces of cophylogenetic trees which admit linear or tropical characterizations, like the extended Four Point Condition for the topology diagonal? 
\end{problem}

\begin{problem} 
Develop distance-based methods for cophylogenetic reconstruction, and study their robustness and geometric properties analagous to \cite{Mihaescu2006}.  
\end{problem}

\begin{problem}\label{pro6}
Can we compute and/or describe some generators of compatability ideals in a particular space of cophylogenetic trees, without resorting to a brute force computation of intersection ideals?
\end{problem}

\begin{problem}
Compute and study the face structures of joint BME polytopes for specific spaces of cophylogenetic trees.  In particular, can their graph structures (vertices and edges) be determined?  
\end{problem}

\section{Proof of Theorem~\ref{thm3}}
\label{sec:k_1_proof}

\begin{proof}
It is clear that if $T_P$ is at most 1 NNI from $T_H$, then $T_P$ and $T_H$ satisfy 1-interval cospeciation.  

To show that if there is 1-interval cospeciation, then $T_H$ and $T_P$ differ by at most one NNI, we use induction on the number of leaves.  The base case occurs when there are 4 leaves.  Then, there can be most one NNI, and thus the number of edges between leaves changes by at most 1.

If some cherry $(a,b)$ in $T_H$ that is also a cherry in $T_P$, then replace these cherries with the same leaf and we are done by induction.  It remains to consider the case when no cherry $(a,b)$ in $T_H$ is also a cherry in $T_P$.

1-interval cospeciation implies $a$ and $b$ are at most 3 edges apart, and thus exactly 3 edges apart, since they do not form a cherry.  Then without loss of generality, $a$ forms a cherry with some subtree $S_c$, and $b$ is attached just above this cherry, as shown in Fig.~\ref{fig:k_1_proof}.  Futhermore, we can assume, without loss of generality, that in a sequence of NNIs transforming $T_H$ into $T_P$, the NNI moving $S_c$ to between $a$ and $b$ occurs last.  Let $T_c$ be the second last tree in this sequence of NNIs.  Then by our hypothesis, $T_H$ and $T_c$ differ by at least 1 NNI.
\begin{figure}[ht]
\centering
\includegraphics[scale = 0.5]{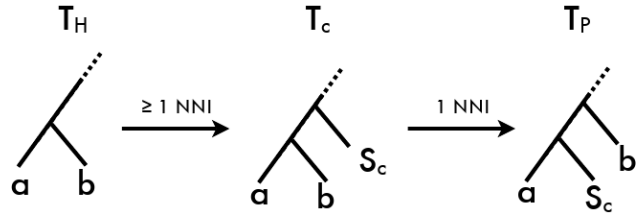}
\caption{The trees used in the proof of Theorem~\ref{thm3}.}
\label{fig:k_1_proof}
\end{figure}
If $T_H$ and $T_P$ differ by exactly 1 NNI, then there are two cases.  If this NNI is done about an edge in $S_c$, then some leaf $l$ in $S_c$ moves one edge closer to the root of $S_c$.  This implies the number of edges between $a$ and $l$ decreases by 2 between $T_H$ and $T_P$, which is a contradiction.  Otherwise, $a$, $b$, and $S_c$ are all contained in one of the subtrees involved in the interchange, say $A$ (using the notation from Fig.~\ref{fig:NNIoperation}).  Then this NNI moves $a$ one edge closer to each leaf in $B$, and the second NNI moves $a$ one edge closer to the root of $A$, and hence to each leaf in $B$.  Thus, the number of edges between $a$ and any edge in $B$ decreases by 2 between $T_H$ and $T_P$, which is a contradiction.

Thus at least two NNIs are needed to transform $T_H$ into $T_c$.  Both $T_H$ and $T_c$ contain the cherry $(a,b)$, so replace this cherry with the leaf $ab$ in $T_H$ to get the tree $T_H'$, and in $T_c$ to get the tree $T_c'$.  

By the induction hypothesis on $T_H'$ and $T_c'$, there are distinct leaves $i,j$ such that 
\begin{equation}
\label{eq:leaf_conditions}
\abs{ d_{T_c'} (i,j) - d_{T_H'}(i,j) } > 1.
\end{equation}
\\
\emph{Case ($i, j \notin S_c$ and $i, j \neq ab$) or ($i, j \in S_c$) : }
Then $d_{T_c'}(i,j) = d_{T_c}(i,j) = d_{T_P}(i,j)$ and $d_{T_H'}(i,j) = d_{T_H}(i,j)$.  Plugging into \eqref{eq:leaf_conditions} gives $\abs{ d_{T_P} (i,j) - d_{T_H}(i,j) } > 1$. \\
\\
\emph{Case $i \notin S_c$, and $j = ab$ : }
Then $d_{T_c'}(i,ab) = d_{T_c}(i, a) - 1 = d_{T_P}(i,a) -1$, and $d_{T_H'}(i, ab) = d_{T_H}(i,a) - 1$.  Plugging into \eqref{eq:leaf_conditions} gives $\abs{ d_{T_P} (i,a) - d_{T_H}(i,a) } > 1$. \\
\\
\emph{Case $i \in S_c$, $j \notin S_c$, and $j \neq ab$ :}
Consider the subtree of $T_c'$ containing $ab$, $S_c$, $j$, and the paths between them.  
Let $x$ be the number of edges between $j$ and the common ancestor of $ab$ and $i$.  
Let $y$ be the number of edges from the root of subtree $S_c$ to $i$.  
Let $V$ be the interior vertex where the paths from  $ab$, $i$, and $j$ meet in $T_H'$.  
Let $u$ be the number of edges between $ab$ and $V$.  
Let $v$ be the number of edges between $j$ and $V$.  
Let $w$ be the number of edges between $i$ and $V$.
Then $d_{T_c}(i,j) = 1 + x + y$.  Since $d_{T_c'}(i,j) = d_{T_c}(i,j)$ and $d_{T_H'}(i,j) = d_{T_H}(i,j)$, then $\abs{ d_{T_H}(i,j) -  d_{T_c}(i,j) }>1$, which implies $d_{T_H}(i,j) \leq x + y - 1$ or $d_{T_H}(i,j) \geq x + y + 3$.  Now $d_{T_P}(i,j) = 2 + x + y$, so 1-interval cospeciation implies $d_{T_H}(i,j) = x + y + 3$.  By definition of $v$ and $w$, $v + w = d_{T_H}(i,j) = x + y + 3$.   

We have $d_{T_P}(a,i) = 2 + y$, $d_{T_P}(b,i) = 3 + y$, and $d_{T_H}(a,i) = d_{T_H}(b,i)$.  So 1-interval cospeciation implies $d_{T_H}(a,i) = d_{T_H}(b,i) = 2 + y \text{ or } 3+ y$.  This implies $u + w = 1 + y \text{ or } 2 + y$.  We also have $d_{T_P}(a,j) = 2 + x$, $d_{T_P}(b,j) = 1 + x$, and $d_{T_H}(a,j) = d_{T_H}(b,j)$.  Then 1-interval cospeciation implies $d_{T_H}(a,j) = d_{T_H}(b,j) = 1+ x \text{ or } 2+x$.  This implies $u + v = x \text{ or } 1 + x$.

Then either $(v+w) -(u + w) = v - u = x+ 2$ or $x + 1$.  If $v - u = x+ 2$ and $u + v = x$, then $2v = 2x + 2$ or $v = x+1$.  This implies $u = -1$, which is impossible.  If $v - u = x+ 2$ and $u + v = 1 + x$, then $2v = 2x + 3$, which is impossible, because all variables are numbers of edges, and hence integers.  If $v - u = x + 1$ and $u + v = x$, then $2v = 2x + 1$, which is also impossible because all variables are integers.  Finally, if $v - u = x + 1$ and $u + v = x$, then $2v = 2x + 2$, which implies $v = x + 1$ and $u = -1$, which is impossible.  Therefore, $v + w = d_{T_H}(i,j) \neq x + y + 3$, which is a contradiction. \\
\\
\emph{Case $i = ab$ and $j \in S_c$ :}
Then $d_{T_c'}(ab,j) = d_{T_c}(b,j) - 1 = d_{T_P}(b,j) -1$ and $d_{T_H'}(ab,j) = d_{T_H}(b,j) - 1$.  Plugging into \eqref{eq:leaf_conditions} gives $\abs{ d_{T_P} (b,j) - d_{T_H}(b,j) } > 1$. \\
\\

Therefore, there exist at least two leaves, such that the number of edges between them changes by more than 1 from $T_H$ to $T_P$, which is a contradiction.
\end{proof}

\section*{Acknowledgements}
P. Huggins is supported by the Lane Fellowship in computational biology.  R. Yoshida is supported by NIH R01 grant 1R01GM086888-01.  
This material was based upon work partially supported by the National
Science Foundation under Grant DMS-0635449 to the Statistical and
Applied Mathematical Sciences Institute. Any opinions, findings, and
conclusions or recommendations expressed in this material are those of
the author(s) and do not necessarily reflect the views of the National
Science Foundation. 

%
%

%
\end{document}